\newtheorem{theorem}{Theorem}[section]
\newtheorem{proposition}[theorem]{Proposition}
\newtheorem{corollary}[theorem]{Corollary}
\newtheorem{lemma}[theorem]{Lemma}
\title{Strategically Acting on Information\footnote{I thank Prof. Abreu and Prof. Pearce for their guidance and helpful suggestions on this project.}}
\author{Xiaoming Wang}
\date{December 2024}
\begin{document}

\maketitle

\begin{abstract}
In many non-cooperative settings, agents often possess useful information that provide an advantage over their opponent(s), but acting on such information too frequently can lead to detection. I develop a simple framework to analyze such a trade-off and characterize the optimal way in which to act on information.
\end{abstract}

\section{Introduction}

The Allied success during World War II in cracking the German Enigma cipher was considered pivotal in expediting the end of the conflict and saving millions of lives. However, it wasn't just enough to decrypt German communications. Equally important was the task of determining what intelligence to act on. This is perhaps best illustrated in the movie \textit{The Imitation Game}. The following scene occurs soon after Alan Turing and his team discover that a fleet of German U-boats are about to target a British convey. While everyone else is eager to inform the Admiralty to thwart the attack, Turing wisely points out the risks of doing so:
\bigskip
\par
Alan: \textit{``You know why people like violence, Hugh? It's because it feels good. Sometimes we can't do what feels good. We have to do what is logical. Hardest time to lie to somebody is when they're expecting to be lied to. $\ldots$ So our convey suddenly veers off course. A squadron of RAF (Royal Air Force) bombers miraculously descends on the coordinates of the U-boats. What will the Germans think?"}
\smallskip
\par
Hugh: \textit{``The Germans will know we have broken Enigma.''}
\bigskip
\par
Besides war, it is not difficult to envision situations in day-to-day life where one might stand to gain by selectively acting on information. A corrupt government official who wishes to embezzle public funds, for example, would be judicious to do so at irregular intervals and in small amounts to reduce their chances of being caught. Likewise, in poker, a player might be able to tell from someone's body language whether they actually have a good hand or are just bluffing. However, it might only be worth capitalizing on this information when the stakes of the game are high enough.
\medskip
\par
More broadly, such considerations can arise even in the event where it is common knowledge that one party has the ability to inflict harm on the other. The United States, for instance, levies sanctions against other nations, entities, or individuals that it deems a national security threat. Nevertheless, it has been recognized \cite{sanctions} that overuse of such powers leads adversaries to decouple their economies from the U.S. to blunt potential negative impacts arising from geopolitical tensions. Indeed, in the ongoing debate over whether the U.S. should liquidate Russian funds to finance the Ukraine war, one of the principal arguments against such measures is that it would scare off future investments to the West. As such, it becomes vital to understand the optimal way in which to utilize information for maximal long-term gain.
\medskip
\par
To refer once again to $\textit{The Imitation Game}$: Later on in the film, in exchange for help from the chief of British Secret Service MI6 to keep their breaking of Enigma a secret, Turing pledges to
\bigskip
\par
\textit{``... develop a system for determining how much intelligence to act on. Which attacks to stop, which to let through. Statistical analysis. The minimal number of actions it would take for us to win the war, but the maximal number we can take before the Germans get suspicious.''}
\bigskip
\par
The goal of this paper is to offer an explicit interpretation of this analysis. In particular, I consider two players engaged in a (possibly) infinitely repeated game with discount factor $\delta$. Player 1 (he) is assumed to have information on player 2 (she) and can choose to either act on that information or feign ignorance in each period. To be more concrete, say that player 2 has a set of available actions to take, and player 1 is to ``guess'' what that action is. The payoff to him is $1$ if he gets it correct and $0$ if he doesn't. On the other hand, Player 2 believes player 1 is partially informed in the sense that she thinks he has a probability $p$ of guessing correctly and holds a Beta$(\alpha,\beta)$ prior over the distribution of $p$, which is updated according to Bayes rule after each period upon observing the outcome. As soon as player 2's expected belief exceeds some threshold $c$, she cuts off future interaction permanently. Otherwise, the game continues indefinitely. 
\medskip
\par
The main result is a classification of the set of optimal strategies for player 1 as a correspondence of $\delta$ for all $\alpha,\beta\in\mathbb{N}$ and a sequence of thresholds $c$ going to $0$. In what follows, Section 2 reviews some related papers and highlights some connections with a few existing strands of literature. Section 3 presents the analysis, and Section 4 considers limitations of the model and potential extensions.

\section{Literature Review}

Despite the prevalence and broad applicability of the described phenomenon, there seems to be little existing literature on this topic. The paper that inspired this project, and which comes closest to modeling this dynamic, is \cite{lee2023feigning}. The motivation there is essentially the same, but being an experimental paper, the setup is quite different. In particular, Lee considers a two-stage game with two players, a seeker and a hider, where the seeker has a predefined probability of being an informed type, meaning they can observe the hiders action before making a decision. In both stages, it is beneficial for the seeker but damaging to the hider if their actions match. After observing the seeker's action, the hider decides whether or not to continue to the second stage. Lee shows that if the payoff in the first stage is large, the informed seeker will match the hider's action in the first period, whereas if the payoff in the first stage is small, the informed seeker will mimic an uninformed seeker so that they can wait until the second stage to match.
\medskip
\par
A closely related paper that addresses the issue of spying and its implications is \cite{solan2004games}. The authors there consider one-shot games where one of two players can purchase information regarding the other's action before the game begins, and where the precision of information increases in the price the player pays. Interestingly enough, espionage creates two opposing effects: a direct effect from knowledge of opponent strategies, but also an indirect effect driven by the opponent's awareness that they are being spied upon. This allows them to commit to a particular strategy beforehand and thereby effectively gain a ``first-mover advantage''. The interplay between the two forces determines the outcome of the game.
\medskip
\par
Espionage also plays an important role in industrial organization. \cite{barrachina2014entry}, for example, studies the effects of industrial espionage on entry deterrence. They consider a monopoly incumbent who may expand capacity to deter entry, and a potential entrant who owns an intelligence system which generates a noisy signal based on the incumbents actions. The incumbent may signal-jam to manipulate the likelihood of the noisy signal and hence influence the entrant’s decisions. The authors find that when the accuracy of the intelligence system is commonly known, it is the incumbent who benefits from their rival's espionage, and the higher the precision, the more they benefit. Conversely, when the accuracy is private information to the potential entrant, a higher precision in the intelligence system benefits the entrant.
\medskip
\par
Another connection worth noting is the literature on strategic experimentation initiated by \cite{gittins1979bandit}, particularly the resemblance to the multi-armed bandit problem. In both cases, the agent's goal is to maximize expected payoff, but in the traditional setup the agent does not know the value of each arm, and needs to ``explore'' them while ``exploiting'' the gains. In contrast, the agent in my model can be seen as knowing the value of each arm but seeking to ``exploit'' them in a way that makes it look like as if he were ``exploring''.
\medskip
\par
The theme of feigning ignorance for long-term gains can also be viewed in some sense as a dual problem to games of reputation such as the chain-store paradox \cite{selten1978chain}. Recall that there, the long-run incumbent has an incentive to project strength by fighting in early periods even when he is weak so as to scare off potential entrants in the future. In my model, the informed seeker is in fact strong, but wants to display weakness so as to keep his opponent around. 

\section{Model and Analysis}

The basic setup was given in the introduction and will not be repeated. Nevertheless, a few comments on some of the assumptions and information structure are in order: In contrast to a more standard model where player 1 simply has some probability of being informed, in my view the notion of being either perfectly informed or perfectly uniformed seems unrealistic. Hence, I find it more natural to let player 2 adopt the perspective that player 1 has some chance of correctly predicting her move. As player 1 succeeds more frequently, player 2 becomes more convinced of player 1's ability to cheat, until at some point she decides it is better to opt-out.
\medskip
\par
The choice of a Beta distribution stems from its property of being the conjugate prior of the binomial distribution and having a particularly simple expression for its expectation. Recall that given a Beta$(\alpha,\beta)$ prior, Bayesian updating yields a posterior of Beta$(\alpha+x,\beta+y)$ after observing $x$ successes and $y$ failures. This makes the analysis particularly tractable, but the drawback is that player 2 cannot place any point mass on player 1 being perfectly informed.
\medskip
\par
Now onto the analysis: note that player 1 can either choose $s$ or $f$ in each period. A strategy $\textbf{x}$ for player 1 is a tuple of the form $(x_{1},\ldots,x_{n})$ or $(x_{1},\ldots,x_{n},\ldots)$, where $x_{i}$ is either $s$ or $f$. In addition, letting $n_{s,k}$ and $n_{f,k}$ denote the number of $s$ and $f$ entries respectively up to the $k^{\text{th}}$ coordinate, a strategy is said to be \textit{feasible} if
\begin{equation}\label{eq:1}
    \frac{\alpha+n_{s,k}}{(\alpha+n_{s,k})+(\beta+n_{f,k})}\leq c
\end{equation}
for all $k<n$ when $\textbf{x}=(x_{1},\ldots,x_{n})$ and for all $k\in\mathbb{N}$ when $\textbf{x}=(x_{1},\ldots,x_{n},\ldots)$. This just means that either player 1 chooses success at some point, which increases player 2's suspicion beyond the threshold $c$, thereby terminating the game, or he keeps player 2's suspicion under $c$ forever.
\medskip
\par
The following observation gives a necessary condition for a strategy to be optimal regardless of what $\delta$ is. This will narrow the scope of our search.

\begin{proposition}
\label{proposition 1}
    If $\textbf{x}$ is an optimal strategy, then changing any $x_{i}$ from $f$ to $s$ must violate condition (\ref{eq:1}) at $k=i$.
\end{proposition}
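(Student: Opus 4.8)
The plan is to argue by contraposition: I will assume that $\textbf{x}$ is feasible but contains some coordinate $i$ with $x_i = f$ for which switching $x_i$ to $s$ keeps (\ref{eq:1}) satisfied at $k=i$, and then exhibit a feasible strategy with strictly larger discounted payoff, contradicting optimality.

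The first step is to rewrite feasibility in a more usable form. Since $n_{s,k} + n_{f,k} = k$, the left-hand side of (\ref{eq:1}) equals $\frac{\alpha + n_{s,k}}{\alpha+\beta+k}$, whose denominator does not depend on the strategy. Hence (\ref{eq:1}) at coordinate $k$ is equivalent to a cap $n_{s,k} \le N_k$ on the cumulative number of successes, where $N_k := \lfloor c(\alpha+\beta+k)-\alpha\rfloor$, and because $c>0$ the thresholds $N_k$ are non-decreasing in $k$. Feasibility thus says exactly that the running success count never exceeds its (non-decreasing) cap. I would record this reduction and the monotonicity of $N_k$ as the two facts driving everything else; note that both apply verbatim to finite and infinite strategies, since the condition is checked coordinate by coordinate.

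Next, starting from the offending $f$ at position $i$, I consider the modified strategy $\textbf{x}'$ that changes $x_i$ to $s$ and leaves everything else unchanged. This raises $n_{s,k}$ by one for every $k\ge i$. If $\textbf{x}'$ is still feasible, we are done immediately: its payoff exceeds that of $\textbf{x}$ by $\delta^{i-1}>0$. If not, let $j$ be the \emph{smallest} coordinate exceeding $i$ at which the cap is now violated. The key observation is that at this first point of failure the original strategy must itself have played $s$: minimality of $j$ forces $n_{s,j-1}+1\le N_{j-1}\le N_j$, while the violation gives $n_{s,j}=N_j$, and these two relations are only compatible with $x_j = s$. I can then define $\textbf{x}''$ by additionally switching this $x_j$ back to $f$, i.e. by moving one success from position $j$ to the earlier position $i$.

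The main work, and the step I expect to be the crux, is verifying that $\textbf{x}''$ is feasible. Relocating a success earlier only changes the running count on the block $i\le k<j$, where it is increased by one; on $k<i$ and $k\ge j$ the counts of $\textbf{x}''$ agree with those of $\textbf{x}$ (the two switches cancel) and so remain capped. Feasibility on the block is exactly what the minimality of $j$ buys: at $k=i$ it is the non-forced hypothesis, and for $i<k<j$ it is precisely the statement that $j$ was the \emph{first} violated coordinate. Once feasibility is secured, the payoff comparison is immediate: moving a success from period $j$ to the strictly earlier period $i$ changes the discounted payoff by $\delta^{i-1}-\delta^{j-1}$, which is strictly positive whenever $\delta<1$. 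In either branch $\textbf{x}$ is beaten by a feasible strategy, so it cannot have been optimal; the degenerate case $\delta=1$ is the only one where a pure relocation is payoff-neutral, and there the first branch, an outright extra success, still applies whenever a non-forced $f$ admits the direct switch.
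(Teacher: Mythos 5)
Your proof is correct, and it is an exchange argument in the same spirit as the paper's, but the exchange is organized differently. The paper takes $j$ to be the first \emph{success} after position $i$ and swaps $x_{i}$ with $x_{j}$; feasibility of the swapped strategy is then immediate because every entry strictly between $i$ and $j$ is a failure, so the running success count is constant on that block and the constraint cannot newly bind there. You instead first try the outright switch of $x_{i}$ to $s$, and only if that breaks feasibility do you take $j$ to be the first \emph{violated} coordinate, deduce from the monotone caps $N_{k}$ together with feasibility of the original $\textbf{x}$ at $j$ that $x_{j}=s$, and move that success back to position $i$. The two choices of $j$ need not coincide (yours can be a later success when the intermediate ones leave slack in the constraint), so the improving strategies produced are genuinely different even though both arguments amount to relocating one success earlier in time. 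Your route has two modest advantages: the first branch disposes of the case where no later success exists, so you do not need the paper's side remark that an optimal strategy must contain another $s$ after $x_{i}$, and the reduction of condition (\ref{eq:1}) to the integer cap $n_{s,k}\leq N_{k}$ with $N_{k}$ non-decreasing isolates the one structural fact everything rests on. The paper's choice of $j$ buys a shorter feasibility check. One bookkeeping point you should make explicit: for a finite strategy $(x_{1},\ldots,x_{n})$, condition (\ref{eq:1}) is only imposed for $k<n$, so ``first violated coordinate'' must mean the first such $j<n$; if the only violation of $\textbf{x}'$ occurs at $k=n$ it is not a feasibility violation at all and your first branch already applies.
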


\begin{proof}
    Suppose it didn't violate condition (\ref{eq:1}) at $k=i$. Note that there must eventually be another $s$ after $x_{i}$, as otherwise $\textbf{x}$ can't be optimal. Letting $j=\inf\{k>i\,|\,x_{k}=s\}$, consider the new strategy $\overline{\textbf{x}}$ obtained from $\textbf{x}$ by interchanging $x_{i}$ and $x_{j}$. By construction, $\overline{\textbf{x}}$ is feasible because there are only failures between $i$ and $j$: if condition (\ref{eq:1}) isn't violated at $k=i$, it can't be violated at any $k<j$. Furthermore, at $k=j$, the number of successes and failures are the same for $\textbf{x}$ and $\overline{\textbf{x}}$, so condition (\ref{eq:1}) again can't be violated. However, at the same time, $\overline{\textbf{x}}$ yields an extra utility of $\delta^{i}-\delta^{j}>0$, thereby contradicting the assumption that $\textbf{x}$ is an optimal strategy.
\end{proof}

In other words, if a (feasible) strategy is to be optimal, player 1 must choose success whenever doing so does not push player 2's suspicion above the given threshold. Denote the set of strategies satisfying this property as $\Gamma(\alpha,\beta,c)$.

\begin{corollary}
    For any $\textbf{x},\textbf{y}\in\Gamma(\alpha,\beta,c)$ such that $\textbf{x}\neq\textbf{y}$, it must be that $|\textbf{x}|\neq |\textbf{y}|$, where $|\cdot|$ is the length of the tuple.
\end{corollary}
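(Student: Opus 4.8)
The plan is to establish the contrapositive: if two strategies $\textbf{x},\textbf{y}\in\Gamma(\alpha,\beta,c)$ have the same length, then they coincide. Suppose instead that $\textbf{x}\neq\textbf{y}$, and let $i$ be the first coordinate at which they disagree, so $x_k=y_k$ for every $k<i$ while $x_i\neq y_i$. Since each entry is either $s$ or $f$, one of the two symbols at position $i$ is $s$ and the other is $f$; without loss of generality take $x_i=s$ and $y_i=f$. Working with the first disagreement is what makes the argument go: because the two strategies share the prefix $(x_1,\ldots,x_{i-1})$, the counts $n_{s,i-1}$ and $n_{f,i-1}$ entering period $i$ are identical for both.

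First I would exploit the defining property of $\Gamma$ on the failure sitting in $\textbf{y}$. Since $y_i=f$ and $\textbf{y}\in\Gamma$, Proposition \ref{proposition 1} says that flipping $y_i$ from $f$ to $s$ must violate (\ref{eq:1}) at $k=i$; equivalently, the posterior mean $\frac{\alpha+n_{s,i-1}+1}{(\alpha+n_{s,i-1}+1)+(\beta+n_{f,i-1})}$ exceeds $c$. But this is exactly the mean that $\textbf{x}$ realizes at period $i$, since $x_i=s$ and the two strategies agree up to $i-1$. Hence (\ref{eq:1}) is already violated at $k=i$ for $\textbf{x}$ itself. As $\textbf{x}$ is feasible, (\ref{eq:1}) holds at every $k<|\textbf{x}|$, so a violation at $k=i$ is possible only if $i=|\textbf{x}|$: the success $x_i$ is the terminal move that pushes player 2's suspicion past $c$ and ends the game.

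I would then close the argument with the asymmetry between the two symbols. A feasible finite strategy terminates precisely when a success drives suspicion above $c$, so its final entry is necessarily $s$; a failure, which can only lower suspicion, is never the move that ends the game. Thus $y_i=f$ forces period $i$ to be non-terminal for $\textbf{y}$, giving $|\textbf{y}|>i=|\textbf{x}|$ (with $|\textbf{y}|=\infty$ if $\textbf{y}$ is the infinite strategy). This contradicts $|\textbf{x}|=|\textbf{y}|$, so no first disagreement can exist and $\textbf{x}=\textbf{y}$.

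The step I expect to be the crux is this final asymmetry, and I would take care to foreground it rather than treat it as a routine detail. The entire argument rests on only a success being able to terminate play, so that at the first disagreement the $s$-strategy is forced to stop while the $f$-strategy is forced to continue. Without invoking that the last move of a finite strategy is a success, the statement is in fact false: taking $\alpha=\beta=1$ and $\tfrac12<c<\tfrac23$, a single success already breaches $c$, so both $(s)$ and $(f)$ satisfy the condition in Proposition \ref{proposition 1} and share length one. Pinning down the model's termination rule — that a finite play ends exactly at the success breaching $c$ — is therefore the load-bearing hypothesis I would isolate before comparing the two strategies.
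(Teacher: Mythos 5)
Your proof is correct and follows the same skeleton as the paper's: isolate the first coordinate $i$ at which $\textbf{x}$ and $\textbf{y}$ disagree, take $x_i=s$ and $y_i=f$ without loss of generality, and play the defining property of $\Gamma(\alpha,\beta,c)$ off the shared prefix. The difference lies in the direction from which the contradiction is extracted. The paper asserts that flipping $y_i$ to $s$ cannot violate (\ref{eq:1}) because $\textbf{x}$ performs exactly that move feasibly, contradicting $\textbf{y}\in\Gamma(\alpha,\beta,c)$; but that justification is only available when $i<|\textbf{x}|$, since feasibility imposes (\ref{eq:1}) only at indices strictly before the last. You instead grant that the flip does violate (\ref{eq:1}), conclude that $i$ must then be the terminal coordinate of $\textbf{x}$, and finish by observing that a failure can never be the terminal move of a feasible finite strategy, so $|\textbf{y}|>i=|\textbf{x}|$. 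This covers precisely the edge case that the paper's one-line justification glosses over, and your example with $\alpha=\beta=1$ and $\tfrac12<c<\tfrac23$ correctly shows that the termination convention --- a finite strategy ends exactly at the success that breaches $c$ --- is genuinely load-bearing: without it, $(s)$ and $(f)$ would be distinct members of $\Gamma(\alpha,\beta,c)$ of equal length. Since that convention is only implicit in the paper's description of feasibility, you are right to surface it as a hypothesis rather than treat it as free.
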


\begin{proof}
    Suppose $\textbf{x},\textbf{y}$ were such that $\textbf{x}\neq\textbf{y}$, but $|\textbf{x}|=|\textbf{y}|$. Then let $i\in\mathbb{N}$ be the smallest index at which $x_{i}\neq y_{i}$. Without loss of generality, let's assume that $x_{i}=s$ and $y_{i}=f$. Then changing $y_{i}$ from $f$ to $s$ does not violate condition (\ref{eq:1}). This contradicts the proposition.
\end{proof}

In light of this corollary, we may enumerate the elements of $\Gamma(\alpha,\beta,c)$ in increasing order according to their length: $\{h^{1},h^{2},\ldots,h^{\infty}\}$, with $h^{1}$ being the strategy of only successes up until player 2's suspicion exceeds $c$.
\medskip
\par
To get some further insight into this characterization, we rearrange equation (\ref{eq:1}) as
\begin{equation}\label{eq:2}
    n_{f,k}\geq \frac{1-c}{c}n_{s,k}+\frac{1-c}{c}\alpha-\beta.
\end{equation}

We can visualize this as follows: Imagine a snake starting from the origin. Moving one step to the right corresponds to a success while moving one step up corresponds to a failure. Note that by equation (\ref{eq:2}), a strategy is feasible if the snake stays in the region above the blue line. The proposition simply says that the snake must keep on moving right until it hits or is about to hit the blue line, at which point it can either take one final step and cross the blue line or move up. In Figure \ref{fig:graph}, the strategy corresponding to the red line $(s,s,s,f,s,f,s,s)$ could potentially be an optimal strategy, whereas the strategy corresponding to the green line $(f,s,s,f,s,s,f,s,s,s)$ cannot.
\medskip
\par
For our purposes, we will simply find the optimal strategies for all $c=1/(m+1)$, where $m\in\mathbb{N}$. Solutions for other values of $c$ follow the same flavor and adds little insight at the expense of significantly more tedious algebra. 
\medskip
\par
First, we explicitly describe $h^{2}$, which turns out to be important in the final characterization of optimal strategies.

\begin{lemma}
\label{lemma 1}
    Given any $\alpha,\beta\in\mathbb{N}$ and $c=1/(m+1)$ such that $c\leq\alpha/(\alpha+\beta)$, write $\beta=mr+k$, where $k<m$. Then $h^{2}$ consists of $r-\alpha$ successes, followed by $m-k$ failures, followed by $2$ successes.
\end{lemma}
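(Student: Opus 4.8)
The plan is to reduce everything to a one-dimensional lattice walk. Substituting $c=1/(m+1)$ makes $\tfrac{1-c}{c}=m$, so the rearranged feasibility condition $(\ref{eq:2})$ reads $n_{f,k}\ge m\,n_{s,k}-(\beta-m\alpha)$. Writing $d:=\beta-m\alpha$ and tracking the single scalar $g_{k}:=m\,n_{s,k}-n_{f,k}$, feasibility becomes $g_{k}\le d$ at every non-terminal coordinate, a success sends $g\mapsto g+m$, and a failure sends $g\mapsto g-1$. Substituting $\beta=mr+k$ gives $d=m(r-\alpha)+\rho$ with $\rho:=k$ the remainder, and the hypothesis guarantees $d\ge 0$ (equivalently $r\ge\alpha$), so the origin is feasible and the initial block of successes described below is well defined.

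First I would pin down $h^{1}$ and the common prefix of every feasible $\Gamma$-strategy. The rule defining $\Gamma$ forces a success whenever the resulting state keeps $g\le d$, i.e. whenever $g\le d-m$. Starting from $g_{0}=0$, pure successes drive $g$ through $0,m,2m,\dots$; the last value that is $\le d$ is $m(r-\alpha)=d-\rho$, reached after exactly $r-\alpha$ successes, after which one more success gives $g=d-\rho+m>d$ and crosses, ending the game. This is $h^{1}$, of length $r-\alpha+1$. Consequently every element of $\Gamma$ opens with the same $r-\alpha$ forced successes and arrives at the branch node $g=d-\rho$, a distance $\rho<m$ below the threshold $g=d$, so the next success would overshoot.

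Next I would characterize $h^{2}$. The key structural observation is that any strategy carrying the same number of successes as $h^{1}$ but with one or more failures merely postpones successes and is therefore strictly dominated by $h^{1}$ for every $\delta\in(0,1)$ (by the obvious rearrangement placing all successes as early as possible); hence the genuinely next strategy is the shortest one in $\Gamma$ that attains one additional success. To seat an extra \emph{non-terminal} success the walk must first descend to $g\le d-m$, which from $g=d-\rho$ costs exactly $m-\rho$ failures, landing at $g=d-m$; one success then returns the walk to the line ($g=d$) and a second crosses it ($g=d+m>d$), terminating the game. This produces precisely the string of $r-\alpha$ successes, $m-\rho$ failures, and $2$ successes claimed, of length $(r-\alpha)+(m-\rho)+2$. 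I would then verify membership in $\Gamma$: each of the $m-\rho$ failures occurs at a node with $d-m<g\le d$, so flipping it to a success would give $g+m>d$ and violate $(\ref{eq:1})$ at that coordinate, exactly the condition of Proposition \ref{proposition 1}; feasibility holds because $g$ never exceeds $d$ before the terminal step; and the arrangement is forced, hence unique.

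The main obstacle is the minimality count, and it is exactly where the remainder $\rho=\beta\bmod m$ is indispensable. For a terminating strategy with $S$ successes and $F$ failures, the terminal-crossing and penultimate-feasibility inequalities $mS-F>d\ge m(S-1)-F$ fix, for each fixed $S$, the admissible values of $F$ and hence the minimal length $S+F$. For $S=r-\alpha+1$ these force $F<m-\rho$, giving short crossings that carry no new success and are dominated as above; the first strategy with a strictly larger success count is $S=r-\alpha+2$, for which the same inequalities force $F\ge m-\rho$, minimized exactly at our candidate. Proving this descent bound — that one cannot re-cross the line with fewer than $m-\rho$ intervening failures once an additional non-terminal success is demanded — is the crux; the Corollary (distinct lengths on $\Gamma$) then certifies that the minimizer is unique and equals $h^{2}$.
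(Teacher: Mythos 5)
Your argument is correct in substance and, at its computational core, coincides with the paper's: both identify the $r-\alpha$ forced initial successes from the intercept $\beta/m-\alpha$, the $m-k$ failures required before a further non-terminal success becomes available, and the concluding two successes; your scalar state $g_{k}=m\,n_{s,k}-n_{f,k}$ is just a transparent repackaging of inequality (\ref{eq:2}). The genuinely different (and valuable) part is your treatment of the enumeration. You observed that when $m-k\geq 2$ the set $\Gamma$ also contains the terminating strategies $(s^{r-\alpha},f^{j},s)$ for $1\leq j\leq m-k-1$: each such $f$ sits at a state from which a success would cross the line, so Proposition \ref{proposition 1} does not exclude it, and these strategies are strictly shorter than the strategy the lemma calls $h^{2}$. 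Under the paper's own convention that $h^{2}$ is the second-shortest element of $\Gamma$, the lemma as literally stated is therefore false unless $m-k=1$; the paper's proof never confronts this, whereas your repair --- discarding these strategies as dominated by $h^{1}$ for every $\delta\in(0,1)$ and taking $h^{2}$ to be the shortest element of $\Gamma$ carrying one additional success --- is exactly the reading the subsequent Theorem requires. Two smaller remarks. First, the hypothesis as printed, $c\leq\alpha/(\alpha+\beta)$, actually yields $\beta\leq m\alpha$, i.e.\ $d\leq 0$ and $r\leq\alpha$, the opposite of what you assert; you have implicitly reversed the inequality, which is surely what the lemma intends (otherwise $r-\alpha<0$ and the conclusion is vacuous, and the paper's own Figure \ref{fig:graph} satisfies $c\geq\alpha/(\alpha+\beta)$), but you should say explicitly that you are correcting the statement rather than deriving $d\geq 0$ from it. Second, the ``descent bound'' you defer as the crux is already closed by the two inequalities $m(S-1)-F\leq d<mS-F$ that you write down: for $S=r-\alpha+2$ they force $F\geq m-k$ immediately, and the $\Gamma$-property pins the arrangement uniquely, so no further work is needed there.
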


\begin{proof}
    To see this, observe that the $n_{s,k}$ intercept
    \begin{equation*}
        \frac{\beta}{m}-\alpha=r-\alpha+\frac{k}{m}
    \end{equation*}
    has an integer part of $r-\alpha$, so this is the number of initial successes. Now let $l$ be the the minimum number of failures that must occur before another success does not violate condition $(1)$. Then by definition,
    \begin{equation*}
        m(r-\alpha+1)+m\alpha-\beta=m-k\leq l,
    \end{equation*}
    meaning that $h^{2}$ is followed by $m-k$ failures. Finally, $h^{2}$ must then contain only successes until condition (\ref{eq:2}) is violated. Since $m\geq 1$, this equates to $2$ successes.
\end{proof}

\begin{figure}
\centering
\begin{tikzpicture}[scale=1]
    \draw[help lines, color=gray!30] (-1,-1) grid (7,6);
    \draw[thick,->] (-1,0) -- (7,0) node[right] {$n_{s,k}$};
    \draw[thick,->] (0,-1) -- (0,6) node[above] {$n_{f,k}$};
    \foreach \x in {0,1,2,...,6} {\foreach \y in {0,1,2,...,5} {\fill (\x,\y) circle (0.05);}}
    \draw[blue, thick] plot[domain=2:7] (\x,\x-3);
    \draw[red, ultra thick] (0,0) -- (3,0);
    \draw[red, ultra thick] (3,0) -- (3,1);
    \draw[red, ultra thick] (3,1) -- (4,1);
    \draw[red, ultra thick] (4,1) -- (4,2);
    \draw[red, ultra thick] (4,2) -- (5,2);
    \draw[red, ultra thick] (5,2) -- (6,2);
    \draw[green, ultra thick] (0,0) -- (0,1);
    \draw[green, ultra thick] (0,1) -- (2,1);
    \draw[green, ultra thick] (2,1) -- (2,2);
    \draw[green, ultra thick] (2,2) -- (4,2);
    \draw[green, ultra thick] (4,2) -- (4,3);
    \draw[green, ultra thick] (4,3) -- (6,3);
    \draw[green, ultra thick] (6,3) -- (6,5);
\end{tikzpicture}
\caption{$\alpha=1, \beta=3, c=\frac{1}{2}$}
\label{fig:graph}
\end{figure}

Now we can state the main result:

\begin{theorem}
    For $c=1/(m+1)$, $m\in\mathbb{N}$, the optimal strategy of player 1 is
    \begin{equation*}
        \begin{cases}
            h^{1}, & \text{ if } \delta<z_{m-k};\\
            \{h^{1},h^{2}\}, & \text{ if } \delta=z_{m-k};\\
            h^{2}, & \text{ if } z_{n-k}<\delta<z_{m};\\
            \{h^{2},h^{3},\ldots,h^{\infty}\}, & \text{ if } \delta=z_{m};\\
            h^{\infty}, & \text{ if } \delta>z_{m}.
        \end{cases}
    \end{equation*}
\end{theorem}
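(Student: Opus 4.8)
The plan is to reduce the optimization to the countable family $\Gamma(\alpha,\beta,c)=\{h^{1},h^{2},\ldots,h^{\infty}\}$ and then compare the discounted payoffs $U(h^{i})=\sum_{p}\delta^{p}$ (summing over the periods $p$ at which a success occurs) across this family as a function of $\delta$. By Proposition \ref{proposition 1} any optimal strategy lies in $\Gamma$, and by the Corollary its members are totally ordered by length, so it suffices to understand $U(h^{i})$ for each $i$. First I would make the staircase structure of every $h^{i}$ explicit, generalizing Lemma \ref{lemma 1}. Writing $a=r-\alpha$ for the number of initial successes and tracking the ``slack'' $n_{f,k}$ minus the boundary height, the forced rule ``go right until one more success would cross'' produces: $a$ initial successes (ending with slack $k$), a first block of $m-k$ failures, and thereafter alternating blocks of $m$ failures and single boundary-successes, with termination always occurring as one extra crossing success immediately following a boundary-success. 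Thus $h^{1}=s^{a+1}$, $h^{2}=(s^{a},f^{m-k},s,s)$ as in Lemma \ref{lemma 1}, and in general $h^{i+1}$ is obtained from $h^{i}$ by the local surgery of replacing the terminal crossing success with a block of $m$ failures, a boundary-success, and then the crossing success.

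Next I would compute the two decisive payoff differences by this surgery, so that all shared periods cancel. Passing from $h^{1}$ to $h^{2}$ deletes a success at period $a+1$ and inserts successes at periods $a+(m-k)+1$ and $a+(m-k)+2$, giving
\begin{equation*}
U(h^{2})-U(h^{1})=\delta^{a+1}\bigl[\delta^{m-k}(1+\delta)-1\bigr].
\end{equation*}
For $i\geq 2$, the passage from $h^{i}$ to $h^{i+1}$ deletes the terminal crossing success (at some period $P$) and inserts two successes $m$ and $m+1$ periods later, giving
\begin{equation*}
U(h^{i+1})-U(h^{i})=\delta^{P}\bigl[\delta^{m}(1+\delta)-1\bigr],
\end{equation*}
with the \emph{same} bracket for every $i$. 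I would then define $z_{j}$ as the unique root in $(0,1)$ of $\delta^{j}(1+\delta)=1$: the map $\delta\mapsto\delta^{j}+\delta^{j+1}$ is continuous and strictly increasing from $0$ to $2$ on $[0,1]$, so $z_{j}$ exists uniquely, and since $\delta^{j}+\delta^{j+1}$ is decreasing in $j$ for fixed $\delta\in(0,1)$, the sequence $z_{j}$ is increasing; in particular $z_{m-k}\leq z_{m}$, with equality iff $k=0$.

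The conclusion then follows from a sign analysis. The first display shows $U(h^{2})\gtrless U(h^{1})$ according as $\delta\gtrless z_{m-k}$. The second display shows that, for $\delta<z_{m}$, the bracket is negative so $U(h^{2})>U(h^{3})>\cdots$ and the tail is dominated by $h^{2}$; for $\delta>z_{m}$ the bracket is positive so $U(h^{2})<U(h^{3})<\cdots$, and since $P\to\infty$ forces the inserted and deleted terms to vanish geometrically, $U(h^{i})\uparrow U(h^{\infty})$ with the supremum attained by the genuine infinite strategy $h^{\infty}$. Combining with $z_{m-k}\leq z_{m}$ yields each case: for $\delta<z_{m-k}$ one has $h^{1}>h^{2}\geq h^{i}$, so $h^{1}$ is optimal; for $z_{m-k}<\delta<z_{m}$ one has $h^{2}>h^{1}$ and $h^{2}>h^{i}$ for all $i\geq 3$ as well as $h^{2}>h^{\infty}$, so $h^{2}$ is optimal; for $\delta>z_{m}$ the tail is increasing and beats $h^{1}$, so $h^{\infty}$ is optimal; and the two equality cases produce the stated ties (with the degenerate $k=0$ case collapsing both thresholds and tying the whole family).

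The main obstacle I anticipate is organizational rather than analytical: establishing that a \emph{single} bracket $\delta^{m}(1+\delta)-1$ governs every consecutive difference $U(h^{i+1})-U(h^{i})$ for $i\geq 2$, which is what lets one threshold $z_{m}$ control the entire tail at once. This hinges on verifying that after the first block each ``tread'' of the staircase is identical (exactly $m$ failures returning to the boundary, followed by a single feasible success, with only the very first block shortened to $m-k$ because the initial run of successes leaves slack $k$ rather than $0$). I would spend the most care on this bookkeeping and on the limiting argument that $U(h^{\infty})=\lim_{i}U(h^{i})$ is both finite and attained, since the theorem's large-$\delta$ branch and the tie at $\delta=z_{m}$ depend on $h^{\infty}$ genuinely belonging to $\Gamma$.
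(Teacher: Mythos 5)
Your proposal is correct and follows essentially the same route as the paper: both reduce the problem to $\Gamma(\alpha,\beta,c)$ via Proposition \ref{proposition 1}, compare consecutive strategies $h^{i}$ and $h^{i+1}$ through the local trade-off of taking the terminal crossing success now versus waiting $m$ rounds (or $m-k$ rounds when $i=1$, by Lemma \ref{lemma 1}) for two successes, and read off the thresholds $z_{m-k}$ and $z_{m}$. Your write-up is in fact more complete than the paper's own two-sentence argument, since you make explicit the staircase structure of $h^{i}$ for $i\geq 3$ and the limiting identification of $U(h^{\infty})$ with $\lim_{i}U(h^{i})$, both of which the paper leaves implicit.
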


\begin{proof}
    Note that while the set $\Gamma(\alpha,\beta,c)$ is infinite, the incentives governing the agent's decision between $h^{i}$ and $h^{i+1}$ are the same for every $i\geq 2$: eliminating the common discount factor, if another success will drive player 2's suspicion above $c$, player 1 can either do just that and end it with a success, or he could wait $m$ more rounds for two consecutive successes. Thus, if $1<\delta^{m}+\delta^{m+1}$, i.e. $\delta>z_{m}$, $h^{i+1}$ yields a higher payoff than $h^{i}$. Since this applies to every $i\geq 2$, he will simply choose to keep player 2 engaged forever. When $1<\delta^{m}+\delta^{m+1}$, we must consider the trade-offs between $h^{1}$ and $h^{2}$. Note that the number of failures that player 1 must endure to get two successes in this case is $m-k$ by Lemma \ref{lemma 1}, so he prefers $h^{2}$ over $h^{1}$ when $1<\delta^{m-k}+\delta^{m-k+1}$, and vice versa.
\end{proof}

\begin{proposition}
\label{proposition 2}
    For every $n\in\mathbb{N}$, the equation $1=x^{n}+x^{n+1}$ has a unique solution $z_{n}$ in $[0,1]$, and $z_{n}$ is increasing in $n\in\mathbb{N}$.
\end{proposition}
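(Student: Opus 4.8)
The plan is to fix the auxiliary function $f_n(x)=x^{n}+x^{n+1}$ on $[0,1]$ and treat existence/uniqueness and monotonicity as two separate tasks. For existence and uniqueness, I would first record the endpoint values $f_n(0)=0<1$ and $f_n(1)=2>1$, so continuity and the intermediate value theorem already give at least one root in $(0,1)$. To upgrade this to a \emph{unique} root, I would compute
\[
f_n'(x)=n\,x^{n-1}+(n+1)\,x^{n}=x^{n-1}\bigl(n+(n+1)x\bigr),
\]
which is strictly positive on $(0,1]$. Hence $f_n$ is strictly increasing on $[0,1]$, the root $z_n$ is unique, and moreover $z_n\in(0,1)$.

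For the monotonicity claim $z_n<z_{n+1}$, the key observation I would exploit is that shifting the exponent lets the defining equation factor. Evaluating $f_{n+1}$ at the already-known point $z_n$ gives
\[
f_{n+1}(z_n)=z_n^{\,n+1}+z_n^{\,n+2}=z_n\bigl(z_n^{\,n}+z_n^{\,n+1}\bigr)=z_n\,f_n(z_n)=z_n,
\]
using $f_n(z_n)=1$. Since $z_n\in(0,1)$, this reads $f_{n+1}(z_n)=z_n<1=f_{n+1}(z_{n+1})$. Because $f_{n+1}$ is itself strictly increasing (the same derivative computation with $n+1$ in place of $n$), I can invert this inequality to conclude $z_n<z_{n+1}$, which is exactly the assertion that $z_n$ is increasing in $n$.

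The argument is essentially routine; the only step needing a moment of insight is the factorization $f_{n+1}(z_n)=z_n\,f_n(z_n)$, which converts a comparison between the roots of two different equations into a single evaluation at a known point. I expect no genuine obstacle beyond spotting this, although one should check that strict monotonicity of each $f_n$ on $[0,1]$ really does transfer the strict inequality $f_{n+1}(z_n)<f_{n+1}(z_{n+1})$ back to the arguments, which it does. If desired, I could append a short remark that $z_n\to 1$ as $n\to\infty$: the increasing sequence $z_n$ bounded by $1$ has a limit $z\le 1$, and if $z<1$ then $z_n^{\,n}+z_n^{\,n+1}\le z^{n}+z^{n+1}\to 0$, contradicting $f_n(z_n)=1$, so $z=1$. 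This is not required by the statement but fits the paper's later discussion of thresholds $c\to 0$.
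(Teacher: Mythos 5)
Your proof is correct and follows essentially the same route as the paper: intermediate value theorem for existence, strict positivity of $f_n'$ on $(0,1)$ for uniqueness, and then comparing $f_{n+1}(z_n)$ with the target value to deduce $z_n<z_{n+1}$. The only cosmetic difference is in the last step, where the paper simply notes that $x^{n}+x^{n+1}$ is decreasing in $n$ for fixed $x\in(0,1)$, whereas you obtain the same inequality via the factorization $f_{n+1}(z_n)=z_n f_n(z_n)$; both yield $f_{n+1}(z_n)<1$ and the conclusion follows identically.
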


\begin{proof}
    Let $f_{n}(x)=x^{n}+x^{n+1}-1$. For existence, observe that $f_{n}(0)=-1<0$ while $f_{n}(1)=1>0$, so by the intermediate value theorem, a solution exists. For uniqueness, notice that $f_{n}'(x)=nx^{n-1}+(n+1)x^{n}>0$ for all $x\in(0,1)$, so $f(x)$ is monotonic. To see that $z_{n}$ is increasing in $n\in\mathbb{N}$, observe that $f_{n}(x)$ is decreasing in $n$ for all $x\in(0,1)$.
\end{proof}

Part of this result is entirely expected: when the discount factor is high, player 1 is patient enough that he is never willing to let player 2 terminate the game for any immediate gains, and when the discount factor is low, player 1 simply wishes to grab as much as possible in the short-run. Furthermore, by proposition \ref{proposition 2}, as player 2's threshold $c$ decreases, it becomes increasingly unprofitable for player 1 to feign ignorance because of the longer wait needed to reap rewards. What is perhaps a bit surprising and more non-trivial is that for intermediate values of the discount factor, it is optimal for player 1 to deliberately fail several times before clinching successes until player 2 terminates the game. The idea is that, after the first few periods, any optimal strategy always involves cycling through a sequence of failures before a success, and while it may not be profitable to go through all those failures in return for an extra success, the unique nature of the prior allows the seeker to go through fewer failures before an extra success is rendered possible.

\section{Discussion and Further Directions of Investigation}

There are several possible extensions to this model:
\medskip
\par
The first and most obvious one is to make rewards heterogeneous. Indeed, in nearly all imaginable real life scenarios, the driving incentive behind when to act on information is perhaps the value of gain in doing so relative to the expected reward per period. This could be modeled as the reward $X$ being drawn randomly in each period, with player 1 observing the draw and then deciding whether to succeed for fail. In light of such a change, proposition \ref{proposition 1} would no longer hold true: if the realized gain was small, it would be better to fail even if succeeding doesn't push player 2's suspicion above the threshold.
\medskip
\par
A second extension would be to add either a fixed or random termination mechanism by player 2. Namely, she can terminate the game if her suspicion level exceeds a threshold, but also with some exogenous probability in every period or as a Poisson process with some fixed arrival rate. Again, this phenomenon is pervasive in reality. Every year, for example, university members are asked to change their account login passwords even in the absence of a security breach. The motivation for this is presumably to protect against hackers who are lurking in the shadows waiting to strike at the right moment.
\medskip
\par
A third possible extension would be to consider the role of competition and its effects on this decision-process. Instead of one, suppose there were many players with knowledge of the same flaw in a system. If one player doesn't take advantage of the information, they risk letting someone else capture the benefits. The incentive structure in this case would be identical to a repeated prisoner's dilemma, whereby cooperating is better than not cooperating, but acting on the information while other players don't is best.
\medskip
\par
A major limitation of my model is that player 2 remains mechanical throughout and lacks higher order beliefs. Namely, she simply observes player 1's action, updates her belief and chooses whether to terminate the game based on a threshold. In this sense, it is more of an operations research type problem rather than a game theoretic one. In particular, I do not address issues of counter-intelligence: under the optimal strategy when player 1 is sufficiently patient, the pattern of several successive failures followed by a success would probably not go unnoticed and would be inconsistent with player 2's belief that her opponent is merely guessing. In other words, such strategies would not make player 1 look as if he were truly exploring. At the same time, one could argue that if player 1 feigns ignorance by deliberately missing sufficiently often, then perhaps player 2 doesn't really care so much if player 1 is actually informed or not. Therefore, it seems necessary to consider the payoffs of player 2 as well.
\medskip
\par
In an effort to address these concerns, I have attempted to extend \cite{lee2023feigning} to a (possibly) infinitely repeated game. The setup is as follows: Two players engage in a repeated game of hide-and-seek with discount factor $\delta$. There are $n$ available actions to choose from in each period for both players, and there are two types of seekers, informed and uniformed. The informed seeker can see the hider's chosen action before making his move, whereas the uninformed seeker is assumed to be behavioral, and can only randomize each period equally among all the actions. The initial probability of the seeker being uninformed is $\beta_{0}$, and is common knowledge. After observing the outcome in each period, the hider can decide whether to opt-out of the game or continue. The payoffs to the seeker and hider are, respectively, $a,-a$ if a match occurs, $0,b$ if no match occurs, and $0,0$ if the hider decides to terminate the game, where $a>b>0$.
\medskip
\par
I haven't derived any meaningful results, but some general aspects of the solution can be pointed out. First of all, the lack of a final period precludes a solution via backward induction. To make things tractable, I restrict attention to Markov perfect equilibriums, where the players' strategies depend only on the hider's belief, and not on past behavior. This seems quite reasonable in practice. In particular, let $\beta$ denote the hider's belief in the probability that the seeker is uninformed. It is fairly straightforward to see that for sufficiently high $\beta$, the informed seeker will match with certainty, and the hider will continue with certainty, because she is extremely confident that the match was due to the seeker having gotten lucky. For lower ranges of $\beta$, the hider must punish the informed seeker for transgressions by opting-out at times, so let $q(\beta')$ denote the probability of the hider exiting upon observing a match and $V(\beta')$ denote the maximum continuation value of the informed seeker when the hider continues, where $\beta'$ is the hider's posterior upon observing a match. Now if the hider is willing to randomize, it must mean that she is indifferent between the two options. Since opting-out yields a continuation value of $0$, this must also be her continuation value of staying in. To simplify things, I let the informed seeker adopt the strategy whereby he matches with probability $p(\beta)$ so that the hider's expected value is exactly $0$ in each period. Then the fact that he is willing to randomize means he must be indifferent between matching and missing. This translates into
\begin{equation*}
    (1-q(\beta'))V(\beta')=V(\beta''),
\end{equation*}
where $\beta''$ is the hider's posterior after observing a miss.

\newpage
\bibliographystyle{apalike}
\bibliography{reference}

\begin{thebibliography}{}

\bibitem[Barrachina et~al., 2014]{barrachina2014entry}
Barrachina, A., Tauman, Y., and Urbano, A. (2014).
\newblock {E}ntry and {E}spionage with {N}oisy {S}ignals.
\newblock {\em Games and economic behavior}, 83:127--146.

\bibitem[Gittins, 1979]{gittins1979bandit}
Gittins, J.~C. (1979).
\newblock {B}andit {P}rocesses and {D}ynamic {A}llocation {I}ndices.
\newblock {\em Journal of the Royal Statistical Society Series B: Statistical Methodology}, 41(2):148--164.

\bibitem[Lee, 2023]{lee2023feigning}
Lee, N. (2023).
\newblock {F}eigning {I}gnorance for {L}ong-{T}erm {G}ains.
\newblock {\em Games and Economic Behavior}, 138:42--71.

\bibitem[Selten, 1978]{selten1978chain}
Selten, R. (1978).
\newblock {T}he {C}hain {S}tore {P}aradox.
\newblock {\em Theory and decision}, 9(2):127--159.

\bibitem[Solan and Yariv, 2004]{solan2004games}
Solan, E. and Yariv, L. (2004).
\newblock {G}ames with {E}spionage.
\newblock {\em Games and Economic Behavior}, 47(1):172--199.

\bibitem[Stein and Cocco, 2024]{sanctions}
Stein, J. and Cocco, F. (2024).
\newblock How {F}our {U}.{S}. {P}residents {U}nleashed {E}conomic {W}arfare {A}cross the {G}lobe.
\newblock {\em The Washington Post}.

\end{thebibliography}

\end{document}